\documentclass[english]{article}


\usepackage{custom_tex}

\title{\Large Conditional Quasi-Monte Carlo with Constrained Active Subspaces}
\date{\today}
\author{Sifan Liu\thanks{Email: \texttt{sfliu@stanford.edu}}} 
\affil{Department of Statistics, Stanford University}

\begin{document}

\maketitle

\begin{abstract}
Conditional Monte Carlo or pre-integration is a powerful tool for reducing variance and improving the regularity of integrands when using Monte Carlo and quasi-Monte Carlo (QMC) methods. To select the variable to pre-integrate, one must consider both the variable's importance and the tractability of the conditional expectation. For integrals over a Gaussian distribution, any linear combination of variables can potentially be pre-integrated. Liu and Owen \cite{liu2022pre} propose to select the linear combination based on an active subspace decomposition of the integrand. However, pre-integrating the selected direction might be intractable. In this work, we address this issue by finding the active subspace subject to constraints such that pre-integration can be easily carried out. The proposed algorithm also provides a computationally-efficient alternative to dimension reduction for pre-integrated functions. The method is applied to examples \blue{from computational finance, density estimation, and computational chemistry}, and is shown to achieve smaller errors than previous methods. 
\end{abstract}

\section{Introduction}

Quasi-Monte Carlo (QMC) and its randomized versions \sloppy{(RQMC)} achieve smaller integration errors than plain Monte Carlo (MC), in both theory and practice. However, unlike Monte Carlo, the performance of QMC is highly dependent on the dimension and smoothness of the integrand. 
Directly applying QMC without properly addressing the high dimensionality or non-smoothness of the integrand may not lead to a significant improvement over plain Monte Carlo. To take full advantage of QMC, a dimension reduction or smoothing procedure is often necessary.

Dimension reduction in QMC refers to the process of rearranging the integration variables so that the important ones are integrated with the first few dimensions of QMC points, which takes advantage of the fact that QMC points are typically more uniformly distributed in the first few dimensions \cite{moro:cafl:1994}. 
For integrals $\EE{f(\bfx)}$ over a Gaussian random vector $\bfx\sim\N(0,\Sigma)$, dimension reduction is often achieved by finding an orthogonal matrix to rotate $\bfx$, or equivalently,
choosing a square-root $R$ of the covariance matrix s.t. $RR\tran=\Sigma$, and 
using the estimator
\[
\frac1n\sum_{i=1}^n f(R\Phi^{-1}(\bfu_i)),
\]
where $\bfu_i\in[0,1]^d$ are the QMC samples and $\Phi^{-1}$ is the inverse cumulative distribution function (CDF) of the standard normal distribution applied to each coordinate. 
For example, $R$ can be obtained by the Cholesky decomposition or eigendecomposition of $\Sigma$. 
For many financial applications, the randomness is driven by the Brownian motion, which allows for the Brownian bridge (BB) construction of the square-root $R$. In this case, the Cholesky decomposition is known as the standard construction and the eigendecomposition is known as the PCA construction of the Brownian motion. 
For payoff functions and their derivatives related to Asian options, PCA and BB constructions are usually superior to the standard construction as they result in a lower effective dimension. 
%
These choices of $R$ do not take into account any information of the integrand, and as a result, their performance can vary greatly depending on the integrand. Over the years, various integrand-aware dimension reduction methods have been developed to address this issue.
Imai and Tan \cite{imai2006general} propose a method called linear transformation (LT), which aims to minimize the effective dimension of a first-order Taylor approximation of the integrand expanded at some predetermined points. 
However, the predetermined points might not be representative of the whole function. For instance, for an integrand like $(f(\bfx)-K)_+:=(f(\bfx)-K)\Indc{f(\bfx)-K\geq0}$, it is possible that its gradient is all-zero at some pre-selected points.
To address this issue, Weng et al. \cite{weng2017efficient} propose to apply k-means to cluster the gradients of the integrand at some randomly sampled points. They then determine the rotation matrix by the QR decomposition of clusters' centroids.
More recently, Xiao and Wang \cite{xiao2019enhancing} propose using the principal components of the gradients, known as gradient PCA (GPCA), to find the orthogonal matrix. The GPCA method can be applied in companion with importance sampling to reduce effective dimension of the resulting integrands \cite{zhan:wang:he:2021}.

To improve smoothness, there are three classes of methods. The first class of methods aligns the discontinuities with axes because axes-parallel discontinuities are QMC-friendly \cite{wang2013pricing}. For example, when pricing some digital options, the standard construction can outperform the PCA and BB constructions \cite{papa:2002}, which is because the discontinuities are axes-parallel under the standard construction.
If the integrand takes the form of $f(\bfx)\Indc{w\tran \bfx\geq K}$ where $f$ is smooth and $\|w\|_2=1$ for $\bfx\sim\N(0,I)$, one can make the discontinuity QMC-friendly by choosing an orthogonal matrix $U$ such that the first column of $U$ is $w$. With the change-of-variable $\bfx=U\bfz$, the integrand becomes $f(U\bfz) \Indc{z_1\geq K}$, whose discontinuities are parallel to the second through the last axes. This is the orthogonal transformation (OT) method proposed by \cite{wang2013pricing}.
He and Wang \cite{he2014good} generalize the OT method for integrands with discontinuities in more than one direction. For integrands of the form $f(\bfx)\Indc{W\tran \bfx\geq K}$, where $W\in\R^{r\times d}$, they propose to choose the rotation matrix by the QR decomposition of $W$.
Imai and Tan \cite{imai2014pricing} propose a method called integrated linear transformation (ILT) that combines the LT and OT methods.
The second class of methods remove the discontinuities completely by utilizing the separation-of-variable property \cite{wang2016handling}. If the discontinuities are caused by the indicator function \sloppy{$\Indc{\alpha(\bfx_{2:d})\leq x_1 \leq \beta(\bfx_{2:d}) }$}, then the discontinuity can be avoided by sampling $x_1$ directly from the truncated normal distribution 
$\N(0,1)\indc{[\alpha(\bfx_{2:d},\beta(\bfx_{2:d}]}$. 
The third class of methods is known as conditional Monte Carlo (CMC) or pre-integration, where one or more variables are integrated out by a closed form. Griebel et al. \cite{griebel2010smoothing} \cite{griebel2013smoothing} and Griewank et al. \cite{grie:kuo:leov:sloa:2018} show that for integrands of the form $f(\bfx)\Indc{\phi(\bfx)\geq 0}$, integrating out some variable $x_j$ results in unlimited smoothness under the key assumption that $\phi$ is smooth and monotone in $x_j$ with all other variables held fixed.
Recently, Gilbert et al. \cite{gilb:kuo:sloa:2021:tr} show that the monotonicity condition is indeed necessary for pre-integration to bring smoothness. 
On the practical side, the monotonicity condition is essential in order for the pre-integration step to have a closed form in many option pricing tasks. Hence, for sake of the smoothing effect and tractability of pre-integration, it is important to choose a variable in which the monotonicity condition holds. 

However, the monotonicity condition alone is not enough for pre-integration to bring a significant improvement. Liu and Owen \cite{liu2022pre} observe that for some Asian options, pre-integrating $x_1$ under the standard construction of the Brownian motion does not reduce the RQMC variance by much. This is because the first variable does not explain much of the variance in the integrand. 
Therefore, one must also consider the variable's importance when choosing the pre-integrating variable.
Moreover, when the random variables are Gaussian, which is the case in many financial applications, we can pre-integrate over any linear combination of the input variables. The problem then boils down to choosing an orthogonal matrix $U$ and replace the integrand $f(\bfx)$ by $f_U(\bfx)=f(U\bfx)$, and the pre-integrated variable is $x_1$ in the rotated integrand $f_U$.
Liu and Owen \cite{liu2022pre} propose a systematic way to choose the rotation matrix $U$. The method is motivated by the active subspace \cite{cons:2015}, which is the subspace spanned by the top eigenvectors of the matrix
\begin{align}
     C=\EE{\nabla f(\bfx) \nabla f(\bfx)\tran}.
     \label{def: C}
\end{align}
The intuition is that the top eigenvectors of $C$ correspond to the directions where the function $f$ has the largest variation on average. By taking $U$ to be the eigenvectors of $C$ in descending order, the first variable of $f_U$ corresponds to the direction where $f_U$ has the greatest variance. So pre-integrating $x_1$ can potentially reduce a large proportion of variance. Additionally, the remaining $d-1$ variables are also sorted in decreasing order of importance, which reduces the effective dimension of the resulting integrand.  

One limitation of this approach is that the conditional expectation $\tf_U(\bfx_{2:d})=\EE{f_U(\bfx)\mid \bfx_{2:d}}$ might not have a closed form. Particularly, the rotated integrand $f_U$ might not be monotone in $x_1$. Although in the original paper \cite{liu2022pre}, it is shown that $f_U$ is monotone in $x_1$ when $f$ is the payoff function of the Asian call option or some basket options, this is not always the case. For example, this method fails when the underlying assets of the basket option have negative correlation or when the asset prices do not follow the simple Black-Scholes model, as the pre-integration step does not have a nice closed form.

To address this issue, this paper proposes a method that takes into account all three factors when applying conditional QMC:
\begin{itemize}
     \item Variable importance
     \item Tractability of the conditional expectation
     \item Dimension reduction for the remaining variables
\end{itemize}
in a unified way. The main idea is to choose the pre-integration direction in a similar manner as the active subspace method, but with certain constraints so that pre-integration can be conducted easily. We refer to this proposed algorithm as the constrained active subspaces (CAS) method. 
The method is flexible enough to incorporate a variety of constraints. For example, under the stochastic volatility models, we impose the constraints so that the volatility does not depend on the pre-integration variable, which facilitates a closed form for the pre-integration step. 
In some cases, if we have prior knowledge about the important direction of an integrand, we can also fix the pre-integration direction to that specific direction. 
Additionally, the CAS method is easy to implement as it only involves PCA of gradients of the original integrand at random points. The proposed method is applicable to a broad class of derivative pricing tasks, including \blue{pricing options involving multiple underlying assets}, Greeks, and pricing under stochastic volatility models. \blue{Moreover, its scope extends beyond computational finance, as it finds applications in diverse fields, such as conditional density estimation \cite{l2022monte} and simulating chemical reaction networks.} Our numerical results show that the proposed method improves upon previous methods. 

The rest of the paper is organized as follows. In Section~\ref{sec: background}, we introduce some backgrounds on QMC, conditional Monte Carlo, and derivative pricing. 
Next, we discuss how to select the orthogonal matrix $U$ in two steps: (1) selecting the first column and (2) selecting the remaining columns based on the first column. 
The method for step (2) is more general while the strategies for step (1) are specific to the problem at hand. Therefore, we first focus on step (2) in Section \ref{sec: AS constraints} and then discuss step (1) in Section \ref{sec: examples}.
In Section~\ref{sec: AS constraints}, we introduce the idea of constrained active subspaces (CAS), which allows us to determine the second through last columns of $U$ given the first column.
Section~\ref{sec: examples} addresses the issue of selecting the first column of $U$ for some specific examples, including option pricing under stochastic volatility models and evaluating some Greeks.
Section~\ref{sec: experiments} presents numerical results on pricing a spread option, call options under stochastic volatility models (including Hull-White, Heston, Stein-Stein models), Greeks, \blue{as well as examples of conditional density estimation and chemical reaction networks}. Section~\ref{sec: conclusion} has our conclusions.

\textbf{Notations. } For a positive integer $d$, denote $1:d=\{1,2,\ldots,d\}$. If $u$ is a subset of $1:d$, denote $-u$ as $1:d\setminus u$. For an integer $j\in1:d$, we use $-j$ to represent $1:d\setminus\{j\}$ when the context is clear. \blue{Let $|u|$ denote the cardinality of the set $u$.} For $\bfx\blue{\in\R^d}$, let $\bfx_u\in\R^{|u|}$ be the vector containing only the $x_j$ if $j\in u$. For a matrix $U\in\R^{d\times d}$, we denote $U_{\cdot,u}$ as the submatrix of $U$ containing the columns of indices belong to $u$. We use $\|\bfx\|$ to denote the $L_2$ norm of a vector $\bfx$. We always use the notation $\bfx$ or $\bfz$ to denote a Gaussian random vector unless otherwise specified. For an eigendecomposition, we always sort the eigenvalues in descending order along with their corresponding eigenvectors.

\section{Background}
\label{sec: background}

In this section, we provide some background on randomized QMC (RQMC), conditional Monte Carlo (CMC), and derivative pricing in financial engineering.

\subsection{RQMC}

QMC is an alternative of Monte Carlo for estimating an integral $\mu=\int_{[0,1]^d} f(\bfx)\rd \bfx$. The estimator takes the form of
\begin{align*}
     \hat\mu=\frac1n\sum_{i=1}^n f(\bfu_i),
\end{align*}
where the points $\bfu_i\in[0,1]^d$ are chosen deterministically rather than i.i.d. uniformly as in plain MC.
The QMC point sets are designed to have low discrepancy, which result in smaller integration error than the probabilistic MC error of $O(n^{-1/2})$. Specifically, for integrands of bounded Hardy-Krause variation, QMC yields a deterministic error bound of $O(n^{-1}(\log n)^d)$ \cite{nied:1992}. 

One construction of QMC is based on digital nets. We call $E(\bfk,\bfc)=\prod_{j=1}^d \left[\frac{c_j}{b^{k_j}}, \frac{c_j+1}{b^{k_j}} \right)$ \blue{an elementary interval in base $b$ for integers $k_j\geq 0$ and $c_j\in\{0,1,\ldots,b^{k_j}-1\}$}. This interval has volume $b^{-|\bfk|}$ where $|\bfk|=\sum_{j=1}^d k_j$. A digital net consisting of $b^m$ points is called a $(t,m,d)$-net in base $b$ if all elementary intervals $E(\bfk,\bfc)$ with volume $b^{-|\bfk|}$ no smaller than $b^{t-m}$ contain exactly the volume fraction of points, i.e. $2^{m-|\bfk|}$ points. The parameter $t$, known as the quality parameter, is nondecreasing in $d$ and smaller is better.
An infinite sequence $\{\bfx_i\}_{i\geq 0}$ is called a $(t,d)$-sequence if for all $k\geq 0$ and $m\geq t$ the point set $\{\bfx_{kb^m+i },\, 0\leq i< b^m \}$ is a $(t,m,d)$-net. The most commonly used nets and sequences are those of Sobol' \cite{sobo:1967:tran}. Sobol' sequences are $(t,d)$-sequences in base 2. 

The QMC points can also be randomized so that each point $\bfx_i$ follows $\unif([0,1]^d)$ individually while collectively they have low discrepancy.
Using randomized QMC (RQMC), the intgration error can be conveniently estimated by the standard error of independent random replicates.
One example of RQMC is the nested scrambled digital net introduced in \cite{rtms}. 
\blue{In one-dimension, nested uniform scrambling begins by dividing the unit interval into two halves, $[0,1/2)$ and $[1/2,1)$. The points in the two halves are then exchanged with probability of a half. Next, $[0,1/2)$ is further divided into two halves of length $1/4$, which are then permuted randomly. Similarly, the right half $[1/2,1)$ is also divided into two halves and permuted randomly. This procedure is continued recursively until machine precision is reached. In higher dimension, the scrambling is conducted independently for each dimension.}
Using a scrambled $(t,d)$-sequence, the variance is $o(n^{-1})$ for $L_2$ integrable functions, which has infinite asymptotic efficiency compared with MC. If the integrand is sufficiently smooth, the variance has order $O(n^{-3}(\log n)^{d-1})$ \cite{smoovar,localanti}.
For a comprehensive introduction on QMC and RQMC, see \cite{dick:pill:2010} and \cite{dick:kuo:sloa:2013}.


Despite the usual dimension-dependent error bounds, QMC and RQMC methods work well empirically for many high-dimensional integrals. One explanation is that some integrands might have a low ``effective" dimension, even though the nominal dimension is high.
There are several definitions of effective dimension and mean dimension using the ANOVA decomposition of a function. In the ANOVA decomposition $f(\bfx)=\sum_{u\subseteq\blue{1:d}}f_u(\bfx)$, $f_u$ depends on $\bfx$ only through $\bfx_{u}$, $\int_0^1f_u(\bfx)\rd x_j=0$ if $j\in u$, and $\int_{[0,1]^d}f_u(\bfx) f_v(\bfx)\rd \bfx$ if $u\neq v$. This decomposition gives a decomposition of the variance of $f$:
\begin{align*}
\Var{f}=\sum_{u\subseteq\blue{1:d}}\sigma_u^2,\quad \text{ where }\sigma_u^2=\begin{cases}
     \int_{[0,1]^d}f_u(\bfx)^2\rd \bfx, & u\neq\emptyset\\
     0, & u=\emptyset
\end{cases}.
\end{align*} 
If $\{\bfx_1,\ldots,\bfx_n\}$ is an RQMC point set, the RQMC variance has the expression
\begin{align*}
\Var{\frac1n\sum_{i=1}^n f(\bfx_i)}=\frac1n\sum_{u\neq\emptyset}\Gamma_u\sigma_u^2,     
\end{align*}
where $\Gamma_u$ are called gain coefficients \cite{snetvar}. Usually, $\Gamma_u\ll 1$ for subsets $u$ with small cardinality. So if the variance of $f$ is dominated by $\sigma_u^2$ with small $u$, RQMC variance can be much smaller than MC variance. 
The effective dimension or mean dimension \blue{measures to which extent} the integrand is dominated by ANOVA components of small cardinality. For the explicit expressions of the ANOVA decomposition, effective dimension, and mean dimension, we refer to \cite{mcbook}. 

For Gaussian integrals $\EE[\bfx\sim\N(0,I_d)]{f(\bfx)}$, one can reduce the effective dimension of the integrand by finding an orthogonal matrix $U$ and replace the integrand $f(\bfx)$ by $f_U(\bfx)=f(U\bfx)$. The choices of $U$ are often motivated by heuristics that aim to minimize the effective dimension of some approximations of $f$. For example, the LT method approximates the integrand by its first-order Taylor expansion at some fixed points. 
The active subspaces method is also a dimension reduction method known as gradient PCA (GPCA) \cite{xiao2019enhancing}. 
The reason for choosing $U$ as the eigenvectors of the matrix $C$ defined in Equation~\eqref{def: C} is that QMC point sets tend to be more evenly spaced in the first few dimensions, and aligning the directions with higher variance with the first few dimensions of QMC points can potentially reduce the effective dimension.

The aforementioned dimension reduction methods are primarily designed for the Gaussian distribution which is rotationally invariant. For other distributions, one can first map the points to a Gaussian space, apply dimension reduction, and then map the points back to the original space. For example, when computing the expectation $\EE{f(\bfy)}$ where each coordinates of $\bfy$ are i.i.d. and have cumulative distribution function $F$, we have
\begin{align*}
     \EE{f(\bfy)}=\EE{f(F^{-1}(\Phi(\bfz)) ) }, \text{ where }\bfz\sim\N(0,I_d).
\end{align*} 
Any aforementioned dimension reduction methods can be applied to the function $f\circ F^{-1} \circ \Phi$.
Imai nand Tan \cite{imai2009accelerating} use this approach for derivative pricing under L\'evy models.

\subsection{CMC with active subspaces}

Conditional Monte Carlo (CMC), also known as pre-integration or Rao-Blackwellization \blue{\cite{blackwell1947conditional,casella1996rao}}, is a technique for computing the integral $\EE{f(\bfx)}$ by integrating over some variables with a closed form or a high-precision quadrature rule. For example, if we have a closed form for computing $\tf(\bfx_{-1}):=\EE{f(\bfx)\mid \bfx_{-1}}$, we can sample $\bfx_i\iid\N(0,I_{d-1})$ and estimate the integral by $n^{-1}\sum_{i=1}^n \tf(\bfx_i)$. This estimator is unbiased and has strictly smaller variance unless $f$ does not depend on $x_1$ at all. Moreover, under some conditions, pre-integration improves the smoothness of the integrand and makes it QMC-friendly \cite{grie:kuo:leov:sloa:2018}.

Recently, Liu and Owen \cite{liu2022pre} propose\blue{d} to pre-integrate over a linear combination of the variables, instead of one of the input variables. Equivalently, they propose to find an orthogonal matrix $U$ and pre-integrate over the first variable of the rotated integrand $f_U(\bfx)=f(U\bfx)$. The orthogonal matrix $U$ is found by the active subspace \cite{cons:2015}, which consists of the eigenvectors (in descending order) of the matrix $C$ defined in Equation~\eqref{def: C}.
The subspace spanned by the top eigenvectors of $C$ is known as the active subspace of $f$. Intuitively, the top eigenvectors of $C$ correspond to the direction where $f$ has the greatest variation on average. In particular, the first eigenvector of $C$ is approximately the direction where $f$ has the highest variance. Hence, pre-integrating over that direction can potentially reduce the variance by a large amount.


\subsection{Derivative pricing}
\label{sec: option pricing}
Monte Carlo and QMC methods are widely used for derivative pricing. In financial engineering, an asset price $S_t$ is often modeled by the stochastic differential equation (SDE)
\begin{align}
     \rd S_t = r S_t \rd t + \sigma S_t \rd W_t,
     \label{equ: GBM}
\end{align}
where $r$ is the risk-neutral interest rate, $\sigma$ is the volatility, and $W$ is a standard Brownian motion.
For this simple Black-Scholes model, $S_t$ has a closed-form solution
\[
S_t=S_0\Exp{(r-\sigma^2/2)t + \sigma W_t}.     
\]
For an Asian call option with strike $K$ and maturity $T$, the price is given by
\begin{align*}
     e^{-rT}\EE{\left(\bar S - K\right)_+},\quad \text{where }\bar S=\frac1T\int_0^T S_t \rd t.
\end{align*}
The expectation is estimated by simulating paths of $S_t$ at a discrete time grid and averaging over the value of $(\bar S - K)_+$.
Specifically, we divide the interval $[0,T]$ into $d$ equal-length intervals. Denote $\Delta t=T/d$ and $t_j=j\Delta t$, and $S_j=S_{t_j}$. Then $\bar S$ is approximated
\begin{align*}
     \frac1d\sum_{j=1}^d S_0 \Exp{(r-\sigma^2/2)j\Delta t + \sigma B_j },
\end{align*}
where $B=(B_1,\ldots,B_d)\tran\sim\N(0,\Sigma)$ and $\Sigma_{ij}=\Delta t \cdot (i\wedge j)$. We refer to $B$ as the discrete Brownian motion.


The discrete Brownian motion $B$ can be simulated by choosing a square-root of $\Sigma$. If $R$ satisfies $RR\tran=\Sigma$, we can generate $\bfz_i\iid\N(0,I_d)$ and let $B_i=R \bfz_i\sim\N(0,\Sigma)$. As mentioned before, $R$ can be chosen by the standard construction, PCA construction, or Brownian bridge construction.
For the Asian options, it is observed that the PCA construction is superior to the standard construction when using RQMC. This is because the integrand has a lower effective dimension under the PCA construction.

Further variance reduction can be achieved by conditional Monte Carlo. With the choice of $R$, the integrand is $f_R(\bfz):=\left(\bar S(R, \bfz) - K \right)_+$ where
\begin{align*}
\bar S(R,\bfs)=\frac{S_0}{d}\sum_{j=1}^d \Exp{(r-\sigma^2/2) j\Delta t + \sigma \sum_{k=1}^d R_{jk}z_k },\quad \bfz\sim\N(0,I_d).
\end{align*}
If $R_{\cdot 1}\geq 0$ \blue{componentwise}, then $\bar S(R,\bfz)$ is monotonically increasing in $z_1$. Thus, for each $\bfz_{2:d}$, there exists a unique $\gamma =\gamma(\bfz_{2:d})$ such that $\bar S(R,(\gamma,\bfz_{2:d}) )=K$. Given the threshold $\gamma$, which can be found by the Newton method, the conditional expectation of $f_R(\bfz)$ given $\bfz_{2:d}$ has the closed form
\begin{align}
 \tf(\bfz)=\frac{S_0}{d}\sum_{j=1}^d \Exp{(r-\sigma^2/2) j\Delta t +\sigma\sum_{k=2}^d R_{jk} z_k + \sigma^2 R_{j1}^2/2 } \bar\Phi(\gamma-\sigma R_{j1})-K\bar\Phi(\gamma),
 \label{equ: preint formula}
\end{align}
where $\bar\Phi=1-\Phi$. \blue{This function, when composed with the inverse CDF $\Phi^{-1}$, has singularities at the boundaries of the unit cube, resulting in unbounded Hardy-Krause variation. Owen \cite{owen2006halton} introduced a boundary growth condition, under which RQMC points achieve an error rate of $O(n^{-1+\ep})$ for any $\ep>0$. Subsequently, He \cite{he2019error} showed that the pre-integrated function $\tf$ in Equation~\eqref{equ: preint formula}, along with some Greeks, satisfies the boundary growth condition introduced in \cite{owen2006halton}. As a result, when using the first $n$ points of a scrambled $(t,d-1)$-sequence to integrate $\tf\circ\Phi^{-1}$, the error rate is $O(n^{-1+\ep})$. Moreover, He \cite{he2019error} also showed that if $R_{\cdot 1}$ has mixed signs, then the pre-integrated function is not infinitely smooth and the error rate of $O(n^{-1+\ep})$ is not obtained. This underscores the importance of the monotonicity condition $R_{\cdot1}\geq 1$ in facilitating both tractable computation for pre-integration and improved error rates.
}


However, there are more complex integrands than $(\bar S-K)_+$ and more intricate models than the constant-volatility model.
In such cases, pre-integration might not have a closed form for arbitrary choices of $R$. The rest of the paper will focus on addressing this issue.

\section{Active subspaces with constraints}
\label{sec: AS constraints}

As discussed before, to efficiently apply pre-integration with RQMC to compute a Gaussian integral $\EE[\N(0,I_d)]{f(\bfx)}$, we need to choose an orthogonal matrix $U$ and then pre-integrate $x_1$ of the rotated integrand $f_U(\bfx)=f(U\bfx)$.
The choice of the orthogonal matrix $U$ can be divided into two steps:
\begin{enumerate}
\item Choosing the first column $U_1$ of $U$ so that $f_U$ has big variation along $x_1$ and $\tf_U(\bfx_{2:d})=\EE{f(\bfx)\mid \bfx_{2:d}}$ is easy to compute.
\item Choosing the remaining $d-1$ columns of $U$ so that $\tf_U$ has low effective dimension.
\end{enumerate}
The choice of $U_1$ in step (1) is problem specific and the strategies vary with integrands. We will discuss how to choose $U_1$ for some specific examples in Section \ref{sec: examples}. The rest of this section will be focused on step (2). That is, if $U_1$ has been determined already, how do we choose the remaining columns of $U$ so that the resulting integrand $\tf_U(\bfx_{2:d})$, which will be integrated with RQMC, has a low effective dimension.

Now we fix the first column of $U$ to be $U_1$. To determine the second column of $U$, following the idea of active subspaces, we want to maximize the explained variance along that direction subject to the orthogonality constraint:
\begin{align*}
     \text{maximize}_{v\in\R^d}\quad & v\tran Cv, \\
     \text{s.t.}\quad &v\tran U_1=0, v\tran v=1
\end{align*}
where $C$ is defined in~\eqref{def: C}.
To determine the third column, we need the constraint that it is orthogonal to the first two columns of $U$.
If we have the first $(k-1)$ columns $U_1,\ldots,U_{k-1}$ determined, we choose the $k$-th column by solving
\begin{align*}
     \text{maximize}_{v\in\R^d}\quad &v\tran Cv,\\
     \text{s.t.}\quad & v\tran U_i=0,\;\forall \; 1\leq i\leq k-1,\\
     & v\tran v=1.
     \numberthis\label{optim}
\end{align*}
Fortunately, instead of solving these optimization problems one-by-one, we can find the optimal $U_2,\ldots,U_d$ by one eigendecomposition in the orthogonal complement of $U_1$. 
\begin{theorem}
Let $V=(v_2,\ldots,v_d)$ be an orthogonal basis spanning $U_1^\perp$, the orthogonal complement of $U_1$. \blue{Let $C$ be a positive definite matrix.} If $\widetilde C=V\tran  C V$ has eigenvectors \blue{sorted in the columns of the matrix $\widetilde U$ with associated eigenvalues in descending order}, 
then the solution of the optimization problems~\ref{optim} is given by $(U_2,\ldots,U_d)=V\widetilde U$.
\end{theorem}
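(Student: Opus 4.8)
The plan is to reduce the sequence of constrained optimization problems in~\eqref{optim} to a single unconstrained Rayleigh-quotient problem on the lower-dimensional space $U_1^\perp$, and then invoke the standard variational characterization of eigenvectors (the Courant--Fischer / Rayleigh-Ritz theorem). The key observation is that every feasible $v$ for the $k$-th problem must lie in $U_1^\perp$, because the first constraint for $k=2$ already forces $v\tran U_1=0$, and this constraint is inherited by all subsequent problems. Since $V=(v_2,\ldots,v_d)$ is an orthogonal basis for this $(d-1)$-dimensional space, any such $v$ can be written uniquely as $v=V w$ for some $w\in\R^{d-1}$, and the orthonormality of the columns of $V$ gives $v\tran v=w\tran V\tran V w=w\tran w$, so the unit-norm constraint on $v$ translates exactly to a unit-norm constraint on $w$.

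Under this reparametrization the objective becomes $v\tran C v=w\tran V\tran C V w=w\tran \widetilde C w$, turning the $k$-th problem into: maximize $w\tran\widetilde C w$ over unit vectors $w$ that are also orthogonal to $w_2,\ldots,w_{k-1}$, where $U_i=V w_i$ for the previously chosen columns. The remaining task is to verify that the additional orthogonality constraints $v\tran U_i=0$ transfer cleanly to orthogonality of $w$ against $w_i$ in $\R^{d-1}$; this again follows from $v\tran U_i=w\tran V\tran V w_i=w\tran w_i$ using $V\tran V=I_{d-1}$. Thus solving the original constrained problems in $\R^d$ is equivalent to solving the nested sequence of constrained Rayleigh-quotient problems for $\widetilde C$ in $\R^{d-1}$, whose solutions are precisely the eigenvectors $\widetilde U=(\widetilde u_1,\ldots,\widetilde u_{d-1})$ of $\widetilde C$ in descending order of eigenvalue, by the Courant--Fischer theorem. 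Translating back gives $U_{k}=V\widetilde u_{k-1}$, i.e. $(U_2,\ldots,U_d)=V\widetilde U$.

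**The step I expect to be the main obstacle** is establishing that the orthogonality constraints and the variational problems truly decouple in the correct order, so that the $k$-th eigenvector of $\widetilde C$ supplies the $k$-th column of $U$ after the first. Concretely, one must confirm by induction that once $U_2=V\widetilde u_1,\ldots,U_{k-1}=V\widetilde u_{k-2}$ are chosen as the top eigenvectors of $\widetilde C$ mapped through $V$, the constraint that $w$ be orthogonal to $w_2=\widetilde u_1,\ldots,w_{k-1}=\widetilde u_{k-2}$ is exactly the constraint appearing in Courant--Fischer for the $(k-1)$-th eigenvector of $\widetilde C$. The positive-definiteness of $C$ guarantees $\widetilde C=V\tran C V$ is also positive definite (hence has a full set of real orthonormal eigenvectors and strictly positive eigenvalues), which ensures the maximizers are well defined and the eigenvalue ordering is meaningful; I would state this explicitly. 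The resulting columns $U_i=V w_i$ are automatically orthonormal and orthogonal to $U_1$, since $V$ has orthonormal columns spanning $U_1^\perp$ and the $w_i$ are orthonormal, so $U=(U_1,V\widetilde U)$ is indeed an orthogonal matrix as required.
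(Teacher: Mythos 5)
Your proposal is correct and follows essentially the same route as the paper's proof: reparametrize every feasible direction as $v=Vw$ using the orthonormal basis of $U_1^\perp$, observe that the objective and all constraints transfer exactly (via $V\tran V=I_{d-1}$) to the sequential Rayleigh-quotient problems for $\widetilde C$, and conclude that the maximizers are the eigenvectors of $\widetilde C$ in descending order. Your write-up is simply a more detailed version, making explicit the constraint translation and the appeal to the Courant--Fischer characterization that the paper leaves implicit.
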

\begin{proof}
Because $U_j$ ($j=2,\ldots,d$) are all orthogonal to $U_1$, they can be linearly represented by the basis $(v_2,\ldots,v_d)$. That is, there exists an orthogonal matrix $\widetilde U\in\R^{(d-1)\times(d-1)}$ such that $U_{\cdot,2:d}=V\widetilde U$. Thus, solving for $U_{\cdot,2:d}$ is equivalent to solving for $\widetilde U$.
Note that the objective function for finding $U_j$ is $U_j\tran CU_j=\widetilde U_j\tran V\tran C V \widetilde U_j=\widetilde U_j\tran\widetilde C \widetilde U_j$. So the optimizing $\widetilde U_j$ is the $j$-th eigenvector of $\widetilde C$.
Therefore, the optimizing $(U_2,\ldots,U_d)$ is given by $V\widetilde U$. 
\end{proof}

The orthogonal basis $(v_2,\ldots,v_d)$ can be chosen to be the second through the last columns of the Householder matrix 
\begin{align}
     H=I_d-2ww\tran,\quad\text{where } w=\frac{U_1-e_1}{\|U_1-e_1\|},\quad e_1=(1,0,0,\ldots,0)\tran.     
     \label{equ: Householder}
\end{align}
This procedure, called the constrained active subspace (CAS), is summarized in Algorithm~\ref{algo}. In practice, the matrix $C$ is approximated by a sample average \blue{$\widehat C$}.
The algorithm can be naturally generalized when there are multiple predetermined columns of the orthogonal matrix $U$.

\begin{algorithm}[t]
     \caption{Constrained active subspaces (CAS)}
     \label{algo}
     \begin{algorithmic}
     \REQUIRE{Number of samples $M$ to compute $\widehat C$, a unit vector $U_1$}
     \ENSURE{An orthogonal matrix $U$ whose first column is $U_1$}
     \STATE Take $\bfx_0,\ldots,\bfx_{M-1}\sim\N(0,I_{d})$ by RQMC.
     
     \STATE Compute $\widehat C=\frac{1}{M}\sum_{i=0}^{M-1}\nabla f(\bfx_i)\nabla f(\bfx_i)\tran $.
     
     
     
     \STATE Let $w=\frac{U_1-e_1}{\|U_1-e_1\|}$ and $H=I_{d}-2ww\tran$. Let $V=H_{\cdot,2:d}$.
     
     \STATE Compute the eigenvectors $\widetilde U$ of $\widetilde C=V\tran \widehat CV$. 
     
     \STATE Let $U_{\cdot,2:d}=V \widetilde U$.
     
     \end{algorithmic}
\end{algorithm}

We note that the CAS algorithm differs from the previous method which finds the active subspace of the pre-integrated function $\tf(\bfx_{2:d})=\EE{f(\bfx)\mid \bfx_{2:d}}$. \blue{First and foremost, with CAS, we pre-integrate $U_1\tran\bfx$ instead of one of the coordinate variable $x_1$. The choice of $U_1\tran\mathbf{x}$ is advantageous as it is likely to capture much more variance in $f$ than $x_1$ does, resulting in a larger variance reduction. Second, the CAS method only involves computing gradients for the original function $f(\bfx)$, while the latter requires computing the gradients of $\tf(\bfx_{2:d})$, which might not be readily available in practice. Leveraging these advantages, the proposed method can achieve a greater variance reduction and offer a more efficient implementation.
}

\section{Choosing pre-integration directions}
\label{sec: examples}

The previous section talks about how to choose the second through the last columns of the rotation matrix $U$ when the first column is given. In this section, we describe how to choose the first column for some specific examples in derivative pricing.

\subsection{Stochastic volatility model}
\label{sec: SV}

The stochastic volatility model assumes that $S_t$ satisfies
\[
\frac{\rd S(t)}{S(t)}=r \rd t + \sqrt{V(t)} \rd \blue{W^{(1)}}(t), 
\]
where $V(t)$ is a stochastic process satisfying the SDE
\[
\rd V(t) = a(V(t)) \rd t + b(V(t)) \rd \blue{W^{(2)}}(t).
\]
Here, $W_1,W_2$ are two standard Brownian motions with $\Corr{\blue{W^{(1)}}(t),\blue{W^{(2)}}(t)}=\rho$, \blue{and $a$, $b$ are some functions of $V$.}
Using the same discretization, we can simulate $S_t$ and $V_t$ simultaneously using two correlated discrete Brownian motions \blue{$B^{(1)},B^{(2)}$} with $\Cov{\blue{B^{(1)},B^{(2)}}}=\rho\Sigma$, $\Var{\blue{B^{(1)}}}=\Var{\blue{B^{(2)}}}=\Sigma$.
To do this, we first generate $\bfz=(\bfz_1\tran,\bfz_2\tran)\tran\sim\N(0,I_{2d})$ and let
\begin{align}
\blue{B^{(1)}}=\sqrt{1-\rho^2}  R\bfz_1 + \rho R \bfz_2,\quad \blue{B^{(2)}}=R\bfz_2
\label{equ: gen two BM}
\end{align}
for any $R$ such that $RR\tran=\Sigma$. We let $R=R_{\text{std}}$, the standard construction matrix with $R_{ij}=\sqrt{\Delta t} \Indc{i\geq j}$. Because we will find a rotation $U$ and replace $\bfz$ by $U\bfz$, this does not rule out any possible choice of $R$.
With this choice of $R$, $\blue{B^{(2)}_{j+1}-B^{(2)}_{j}}=\sqrt{\Delta t}z_{2,j}$, $\blue{B^{(1)}_{j+1}-B^{(1)}_{j}}=\sqrt{\Delta t}(\sqrt{1-\rho^2} z_{1,j}+\rho z_{2,j})$. We can simulate $S$ (in the log space) and $V$ by the Euler-Maruyama scheme
\begin{align*}
     \log S_{j+1}&=\log S_j + (r-V_{j}/2)\Delta t+\sqrt{V_{j} \Delta t}(\sqrt{1-\rho^2} z_{1,j+1} + \rho z_{2,j+1}),\\ \numberthis\label{equ: sv euler}
     V_{j+1}&=V_{j} + a(V_{j})\Delta t + b(V_{j}) \sqrt{\Delta t}z_{2,j+1},\; \text{ for } j=0,\ldots,d-1.
\end{align*}
The integrand is still $f(\bfz)=(d^{-1}\sum_{j=1}^d S_j - K )_+$ for the Asian call option.
Pre-integrating $z_{1,1}$ is still direct by noting that $V_j$'s do not depend on $z_{1,1}$ and $S_j$'s
 are monotone in $z_{1,1}$. 

However, in this integrand, pre-integrating $z_{1,1}$ might not reduce the variance by a lot. \blue{According to our numerical experiments, pre-integrating $z_{1,1}$ in these examples only reduce the error by factors ranging from 4 to 14.}
In order to reduce as much variance as we can by pre-integration, we first rotate the integrand by an orthogonal matrix $U$ so that $z_{1,1}$ is a more important direction in the rotated integrand $f(U\bfz)$. 
However, because both $V_j$ and $S_j$ depend on $\bfz$, pre-integration might not be possible if we rotate $\bfz$ by an arbitrary orthogonal matrix $U$. 
In fact, for pre-integration to be easily carried out, the only requirement is that $V_j$'s should be independent of $z_{1,1}$ and that $S_j$'s should be monotone in $z_{1,1}$. This is true as long as $U_{1:d,1}\geq 0$ and $U_{d+1:2d,1}=0$. 

Therefore, using the idea of active subspaces, we choose $U_{\cdot 1}$ to be the solution of the optimization problem
\begin{align*}
\text{maximize}_{v\in\R^{2d}}\quad& v\tran Cv,\\
\text{s.t.}\quad&v\tran v=1, v_{d+1:2d}=0.
\numberthis\label{equ: optim SV}
\end{align*}
Here, $C$ is the $2d\times 2d$ matrix defined in Equation~\eqref{def: C} for the integrand $f(\bfz)=(d^{-1}\sum_{j=1}^d S_j-K)_+$ with $S_j$ given by Equation~\eqref{equ: sv euler}. Without the constraint $v_{d+1:2d}=0$, the solution is the first eigenvector of $C$, i.e. the first direction in the usual active subspace. With the constraint, the optimization problem above finds the direction in which $f$ has the greatest average variation among those that satisfy $v_{d+1:2d}=0$. 
If we partition $C$ into four $d\times d$ blocks
\begin{align}
C=\begin{pmatrix}
C_{11} & C_{12} \\ C_{21} & C_{22}
\end{pmatrix},
\label{equ: partition C}
\end{align}
the solution to problem \eqref{equ: optim SV} is such that $v_{1:d}$ is the first eigenvector of $C_{11}$ and $v_{d+1:2d}=0$. After choosing $U_1$ to be the opimizing $v$, we can apply Algorithm~\ref{algo} to determine the remaining $(2d-1)$ columns of $U$, as described in Section~\ref{sec: AS constraints}. Note that in this example, the first eigenvector of $C_{11}$ always has non-negative signs. This is because $f(\bfz)$ is non-decreasing in all $\bfz_{1:d}$, thus all entries of $C_{11}$ are non-negative, leading to the first eigenvector to have non-negative entries. 



%

Now that we have found the rotation matrix $U$, we describe how to pre-integrate $z_{1,1}$ in $f(U\bfz)$.
We first replace $\bfz$ by $\bfx$ with
\[
     \begin{pmatrix} \bfx_1 \\ \bfx_2 \end{pmatrix} = U \begin{pmatrix} \bfz_1 \\ \bfz_2 \end{pmatrix}.
\]
Following Equation~\eqref{equ: sv euler}, we can write
\begin{align*}
S_j&=S_0\Exp{\sum_{k=1}^{j}(r- V_{k-1}/2)\Delta t + \sqrt{\Delta t}\sum_{k=1}^{j}\sqrt{V_{k-1}} (\rho x_{2,k} + \sqrt{1-\rho^2} x_{1,k}) }\\
&=\exp(c_j z_{1,1} ) \cdot \zeta_j(\bfz_{1,2:d},\bfz_2),
\end{align*}
where
\[
c_j=\sqrt{\Delta t} \sum_{k=1}^{j}\sqrt{V_{k-1}} \sqrt{1-\rho^2} U_{k,1}
\]
and 
\begin{align*}
&\zeta_j(\bfz_{1,2:d},\bfz_2)=S_0\Exp{\sum_{k=1}^{j}(r-V_{k-1}/2)\Delta t + \sqrt{\Delta t}\sum_{k=1}^{j}\sqrt{V_{k-1}} \cdot \right.\\
&\qquad\qquad\;\left.[\rho \sum_{l=1}^d (U_{k+d, l}z_{1,l} + U_{k+d,l+d}z_{2,l}) +
 \sqrt{1-\rho^2} (\sum_{l=2}^d U_{k,l}z_{1,l}+ \sum_{l=1}^d  U_{k,l+d}z_{2,l})]
}.
\end{align*}
Both $c_i$ and $\zeta_i$ do not depend on $z_{1,1}$ since $U_{d+1:2d,1}=0$. Because $U_{1:d,1}\geq 0$, $c_j\geq 0$ for all $j$, for each $(\bfz_{1,2:d},\bfz_2)$, there exists a unique $\gamma=\gamma(\bfz_{1,2:d},\bfz_2)$ such that when $z_{1,1}=\gamma$, it holds that
$\bar S=K$.
Then the expectation of $(\bar S-K)_+$ conditional on $(\bfz_{1,2:d},\bfz_2)$ is equal to
\begin{align*}
\EE{(\bar S-K)_+\mid (\bfz_{1,2:d},\bfz_2)}
&=\frac{1}{d}\sum_{j=1}^d \zeta_j e^{c_j^2/2}\bar\Phi(\gamma -c_j) - K\bar\Phi(\gamma).
\end{align*}
The threshold $\gamma$ can be found by a root-finding algorithm similarly as in the constant-volatility case. \blue{We dropped the dependence of $\zeta_j$ and $\gamma$ on $(\bfz_{1,2:d},\bfz_2)$ for notational simplicity.
}

\subsection{Basket option}
\label{sec: spread}

\blue{An option might involve multiple underlying assets and the payoff depends on a weighted average of them, i.e.
\begin{align*}
\EE{\left(\sum_{\ell=1}^L w_\ell \bar S^{(\ell)} -K\right)_+},
\end{align*}
where $w_\ell$ can be either positive or negative. We assume the price $S^{(\ell)}$ satisfies the SDE~\eqref{equ: GBM} driven by the Brownian motion $W^{(\ell)}$ with volatility $\sigma_\ell$. The correlation $\rho_{k,\ell}$ among $W^{(k)}$ and $W^{(\ell)}$ is allowed to be arbitrary as long as the matrix $(\rho_{k,\ell})_{1\leq k,\ell\leq L}$ remains positive semi-definite. This general formulation encompasses various financial instruments, including the spread option, whose payoff $\EE{(\bar S^{(1)} - \bar S^{(2)} - K)_+}$ depends on the difference between two stocks.}

\blue{Similarly as before, we let $B^{(\ell)}\in\R^d$ be the discretized Brownian motion and let $\widetilde B^{(\ell)}=\sigma_\ell B^{(\ell)}$ be the scaled version. Then $\bfB=(\widetilde B^{(1)},\ldots,\widetilde B^{(L)})\in\R^{dL}$ is a Gaussian vector with mean zero and covariance $\Lambda$, whose $(k,\ell)$-th block is $\rho_{k,\ell}\sigma_k\sigma_\ell\Sigma$ when partitioned into submatrices of size $d\times d$. Take $R$ such that $RR\tran=\Lambda$ and let $\bfB=R\bfz$ where $\bfz\sim\N(0,I_{dL})$. Then
\begin{align*}
\bar S(\bfz)=\sum_{\ell=1}^L w_\ell \bar S^{(\ell)}=\sum_{\ell=1}^L \frac{w_\ell  S^{(\ell)}_0 }{d}\sum_{j=1}^d \exp((r-\sigma_\ell^2/2)j\Delta t + \tB_{j}^{(\ell)} ).
\end{align*}
If for any $j\in 1:dL$, $w_{\ceil {j/d} } R_{j1}\geq 0$, then the above display is monotonically increasing in $z_1$. So for any $\bfz_{-1}$, if there exists $z_1^*=z_1^*(\bfz_{-1})$ such that $\bar S(z_1^*,\bfz_{-1})=K$, then it is unique and the conditional expectation has the expression
\begin{align*}
\numberthis\label{equ: spread preint}
&\EE{(\bar S(\bfz) -K)_+\mid \bfz_{-1}}=\sum_{\ell=1}^L  \frac{w_\ell S^{(\ell)}_0 }{d} \sum_{j=1}^d   \\
&\Exp{(r-\frac{\sigma_\ell^2}{2})j\Delta t + \bfz_{-1}\tran R_{(\ell-1)d+j,-1} + \frac{R_{(\ell-1)d+j,1}^2}{2}  }\bar\Phi(z_1^* - R_{(\ell-1)d+j,1} ) - K\bar\Phi(z_1^*).
\end{align*}
If such a $z_1^*$ does not exist, the formula of the conditional expectation is the same as above with $z_1^*=-\infty$.
}

\blue{We apply the CAS method to choose $R$ such that $RR\tran=\Lambda$ and satisfies the constraints $R_{\cdot1}\in\calA:=\{v\in\R^{dL}\mid  w_{\ceil {j/d} } v_j \geq 0\; \forall j\in 1:dL\}$. Moreover, we want $z_1$ to be an important direction under this choice of $R$. Starting from the Cholesky decomposition $R_0$ of $\Lambda$, we compute the matrix $\widehat C$ for the payoff function $(\bar S(\bfz)-K)_+$. Then we solve the problem
\begin{align*}
\text{maximize}_{v\in\R^{dL}}&\quad v\tran\widehat C v\\
\text{s.t.}&\quad \|v\|_2=1,\, R_0 v\in \calA.
\end{align*}
In practice, we compute the first eigenvector $v_1$ of $\widehat C$ and let $r_1=R_0 v_1$. We then project $r_1$ into $\calA$. Specifically, let $\calI=\{j\in1:dL\mid r_{1,j} w_{\ceil{j/d}}\geq 0 \}$ be the index set of $r_1$ with the correct signs. Note that we can always flip the sign of the entire vector $r_1$ so that $\|r_{1,\calI}\|_2\geq \|r_{1,-\calI}\|_2$. Then we set $r_{1,-\calI}$ to 0 and let $U_1=R_0^{-1}r_1/\|R_0^{-1}r_1\|_2$ be the first column of the rotation matrix of $U$. This truncation step ensures that the first column of $R_0U$ satisfies the constraint of being in $\calA$.
Once $U_1$ is obtained, we invoke Algorithm~\ref{algo} to compute the remaining $dL-1$ columns of $U$. In the end, we use the matrix $R=R_0U$ in the expression~\eqref{equ: spread preint}.
}

\subsection{Greeks}
\label{sec: greeks}

Greeks are derivatives of payoff functions with respect to some parameters like interest rate or volatility. 
They often take the form of $f(\bfx)=g(\bfx)\Indc{h(\bfx)\geq K}$, which are discontinuous. The active subspaces method is not well defined because the gradient does not exist at the discontinuous points. Although the payoff functions considered before are not differentiable at the threshold where $\bar S=K$, they are continuous so ignoring a measure-zero set of non-differentiable points does not cause problems. But for functions with jumps, ignoring the non-differentiable points also ignores the variation caused by the jumps, which might be crucial in determining the active subspaces.

Fortunately, the Greeks can be transformed to continuous functions because they share the separation-of-variable property \cite{wang2016handling}.
Suppose the indicator function can be written as $\Indc{h(\bfx)\geq K}=\Indc{x_1\in[\alpha(\bfx_{-1}), \beta(\bfx_{-1}) ] }$, where $\alpha$ and $\beta$ are functions of $\bfx_{-1}$ and $-\infty\leq \alpha\leq\beta\leq+\infty$.
\blue{We define
\[
T(x_1;\bfx_{-1})=\Phi^{-1}(\Phi(\alpha(\bfx_{-1}) ) + (\Phi(\beta(\bfx_{-1}) ) - \Phi(\alpha(\bfx_{-1}) )) \Phi(x_1) )
\]
and consider the transformation $\bfx\mapsto (T(x_1;\bfx_{-1}), \bfx_{-1})$. The Jacobian is equal to $|\frac{\partial T(x_1;\bfx_{-1})}{\partial x_1} |=(\Phi(\beta(\bfx_{-1}) ) - \Phi(\alpha(\bfx_{-1}) )) \frac{\varphi(x_1)}{\varphi(T(x_1;\bfx_{-1}))} $. Also note that $T(x_1;\bfx_{-1})\in[\alpha(\bfx_{-1}), \beta(\bfx_{-1}) ]$ for $x_1\in\R$.
Applying the change-of-variable formula, we have
\begin{align*}
\int_{\R^d} f(\bfx)\varphi(\bfx)\rd \bfx&=\int_{\R^{d}} g(\bfx) \Indc{x_1\in[\alpha(\bfx_{-1}, \beta(\bfx_{-1}))] } \varphi(\bfx) \rd \bfx\\
&=\int_{\R^d} (\Phi(\beta(\bfx_{-1}) ) - \Phi(\alpha(\bfx_{-1}) )) g(T(x_1;\bfx_{-1}),\bfx_{-1} )  \varphi(\bfx)\rd \bfx.
\end{align*}}
%
With this transformation, the jumps at $\alpha$ and $\beta$ are pushed to $\pm\infty$. So we end up with the continuous integrand
\[
(\Phi(\beta(\bfx_{-1})) - \Phi(\alpha(\bfx_{-1})) ) g(T(x_1;\bfx_{-1}),\bfx_{-1} ) .
\]
We can then apply the active subspaces method to this continuous integrand.

For some Greeks, there might exist a priori known important directions. For example, the \emph{gamma} for the Asian call option has the expression \cite{he2019error}
\begin{align*}
e^{-r T} \frac{\bar S}{S_0^2\sigma^2\Delta t}(\sigma\sqrt{\Delta t} x_1 - \sigma^2\Delta t ) \Indc{\bar S\geq K}.
\end{align*}
Here $x_1$ is the normal variable that generates $S_1$, i.e.
$S_1=S_0\Exp{(r-\sigma^2/2)\Delta t + \sigma \sqrt{\Delta t} x_1}$. So $x_1$ is likely to be an important direction. In our notation, $\bfx=U\bfz$. In order to make $z_1=x_1$, we can restrict the first column of $U$ to be $e_1=(1,0,\ldots,0)\tran$. Then we apply Algorithm~\ref{algo} to choose the remaining $d-1$ columns of $U$. In our numerical experiments, we see that by forcing the first column of $U$ to be $e_1$ and choosing the remaining columns by the CAS method, we can \blue{reduce the error by a factor of thousands. }. 


Moreover, it is possible that the first column of $R=R_0U$ has mixed signs, in which case the pre-integration step cannot be done by one root-finding algorithm. In fact, if $R_{\cdot1}$ has mixed signs, $\bar S=K$ can have two roots. To avoid the complexity of finding multiple roots, we replace the first column of $R$ by some vector that has all non-negative entries. This happens in some of our numerical experiments, where we find that replacing the first column of $R$ by the first column of $R_{pca}$, the PCA construction of the Brownian motion, has the best empirical performance.

\section{Experiments}
\label{sec: experiments}

In this section, we apply the proposed method to some option pricing examples and compare the error with other estimators.
For each experiment and each estimator, we use $n=2^{14}$ MC or RQMC samples and repeat 50 times. We compute the standard error of each estimators over the 50 replicates and report the error reduction factor (ERF)
\begin{align*}
     \text{ERF} := \frac{\hat\sigma^{\text{MC}} } {\hat\sigma }.
\end{align*}
The ERF is the ratio between the standard error of MC estimator and the standard error of other competing estimators using RQMC sampling.
Note that ERF is the square-root of variance reduction factor $(\hat\sigma^{\text{MC}}/\hat\sigma)^2$.

We consider two classes of estimators, one without pre-integration (RQMC) and one with pre-integration (RQMC+preint). 
For all the experiments below, the integrand involves two correlated Brownian motions. The two Brownian motions can be generated as in Equation~\eqref{equ: gen two BM}, where $R$ can be chosen by the standard construction (STD) or the PCA construction (PCA) of one Brownian motion. For the estimators without pre-integration, we can apply the active subspace method (AS) for dimension reduction. For the estimators with pre-integration, we apply the proposed constrained AS (CAS) method to choose the pre-integration direction. 

For methods involving active subspaces, we use $M=256$ gradients to estimate the matrix $C$ in \eqref{def: C} to keep the computation overhead negligible. \blue{All the} gradients are approximated by the finite difference
\begin{align*}
\left( \frac{f(\bfx+\ep e_1)-f(\bfx)}{\ep},\ldots, \frac{f(\bfx+\ep e_d)-f(\bfx)}{\ep} \right),\text{ with }\ep=10^{-6}.
\end{align*}
Because all our integrands are continuous and almost everywhere differentiable, the finite difference approximations are very close to the true gradients if they exist.

\subsection{Spread options}

We first consider the spread option $\EE{\left(\bar S^{(1)} - \bar S^{(2)} - K\right)_+}$
discussed in Section~\ref{sec: spread}. We assume $S^{(1)}$ and $S^{(2)}$ follow the SDE \eqref{equ: GBM} where the two Brownian motions have correlation $\rho$. For all experiments, the risk-neutral interest rate is set to $r=0.05$ and the dimension is $d=32$. The two volatilities in the spread option are $\sigma_1=\sigma_2=0.2$. We vary $\rho\in\{-0.5,0.5\}$ and $K\in\{-10,0,10\}$.

Table~\ref{table: spread} shows the relative efficiencies of each estimator over plain MC estimator. Several observations are in order:
\begin{enumerate}
     \item RQMC+preint with the CAS rotation achieves the smallest error in all settings. 
     \item For RQMC without pre-integration, AS is better than PCA and PCA is better than STD. This shows that dimension reduction can significantly improve RQMC errors.
     \item Pre-integration under the standard construction does not bring as much improvement as pre-integration under the PCA construction or using CAS rotation. 
     \item The relative efficiencies decrease with $K$.
\end{enumerate}

\blue{The methods involving root-finding for pre-integration entail a computation cost approximately 10-20 times higher compared to those without this step. However, the variance reduction achieved by RQMC+preint+CAS (last column) surpasses RQMC+AS (third column), the best method without pre-integration, by a factor of about 200-1600. The variance reduction achieved by pre-integration with CAS more than compensates for the additional computation cost. Moreover, the one-time computation overhead required for computing the active subspace is negligible, as it only uses $M=256$ samples. Therefore, the proposed RQMC+preint+CAS method stands out as the most efficient choice when taking into account both the variance reduction and the computation time.}

\begin{table}
\centering
\begin{tabular}{cc|ccc|ccc}
\toprule
& & \multicolumn{3}{c|}{RQMC} & \multicolumn{3}{c}{RQMC + preint} \\
$\rho$  &   $K$  & STD &  PCA &  AS &   STD &  PCA &  CAS \\
\midrule
-0.5 & -10 &         7.1 &        71.8 &      318.2 &                7.1 &              377.9 &             \textbf{7665.8}  \\
     &  0  &         5.3 &        51.2 &      228.4 &                4.9 &              274.3 &             \textbf{7471.0}  \\
     &  10 &         3.8 &        40.5 &      169.7 &                4.4 &              218.1 &             \textbf{6563.7}  \\
 0.5 & -10 &         9.9 &       125.6 &      327.7 &               12.2 &              680.9 &             \textbf{4564.6}  \\
     &  0  &         5.1 &        70.0 &      223.0 &                6.4 &              352.4 &             \textbf{4400.2}  \\
     &  10 &         3.5 &        35.5 &      112.2 &                5.0 &              243.7 &             \textbf{4511.5} \\
\bottomrule
\end{tabular}
\caption{Error reduction factor for the spread option. Plain MC estimator has a factor of 1. Each estimator uses $n=2^{14}$ scrambled Sobol' points and the standard error is computed using 50 random replicates. The methods in the first three columns apply RQMC to the integrand $f_U$ with rotation matrices $U$ chosen by standard construction (STD), PCA construction (PCA), and active subspaces (AS). The methods in the last 3 columns apply RQMC with pre-integration, with rotation matrices $U$ chosen by standard construction, PCA construction, and constrained active subspaces (CAS). The same protocol applies to the tables below.
}
\label{table: spread}
\end{table}

\subsection{Stochastic volatility models}

We consider three stochastic volatility models: the Hull-White model~\cite{hull1987pricing}, the Heston model~\cite{heston1993closed}, and the Stein-Stein model~\cite{stein1991stock}.
The Hull-White model assumes that $V(t)$ is a geometric Brownian motion satisfying
\[
\rd V(t)= \nu V(t) \rd t + \xi V(t) \rd W_2(t).
\]
We apply the Euler scheme in the log space of $V$ and simulates $V_j$ by
\begin{align*}
     \log V_{j}=\log V_{j-1} + (\nu-\xi^2/2) \Delta t+\xi \sqrt{\Delta t} z_{2,j}.
\end{align*}
We take $r=0.05,\,V_0=0.2,\,\xi=0.5,\,\nu=0$. We vary $\rho\in\{-0.5,0.5\}$ and $K\in\{90,100,110\}$. The results are in Table~\ref{table: hull white}. 

\begin{table}
\centering
\begin{tabular}{cc|ccc|ccc}
\toprule
& & \multicolumn{3}{c|}{RQMC} & \multicolumn{3}{c}{RQMC + preint} \\
$\rho$  &   $K$  & STD &  PCA &  AS &   STD &  PCA &  CAS \\
\midrule
-0.5 & 90  &          7.6 &        49.3 &       82.7 &                8.7 &               95.6 &              \textbf{278.6} \\
     & 100 &          5.3 &        33.9 &       46.9 &                6.5 &               77.1 &              \textbf{237.8} \\
     & 110 &          4.1 &        29.1 &       31.7 &                5.5 &               63.9 &              \textbf{155.3} \\
     0.5 & 90  &          6.4 &        22.9 &       23.2 &                8.1 &               27.5 &               \textbf{33.4} \\
     & 100 &          4.6 &        19.4 &       32.1 &                6.0 &               23.9 &               \textbf{70.5} \\
     & 110 &          3.6 &        16.9 &       27.1 &                4.2 &               20.3 &               \textbf{44.4} \\
\bottomrule
\end{tabular}
\caption{Error reduction factors for the payoff function under Hull-White model.}
\label{table: hull white}
\end{table}


The Heston model assumes that $V(t)$ follows a Cox-Ingersoll-Ross (CIR, square-root mean-reverting) process
\[
\rd V(t) = \kappa(\theta - V(t)) \rd t + \sigma \sqrt{V(t)} \rd W_2(t).
\]
We simulate $V_j$ by
\begin{align*}
     V_{j}=V_{j-1} + \kappa (\theta - V_{j-1})\Delta t + \sigma\sqrt{V_{j-1}\Delta t} z_{2,j}.
\end{align*}
We take $r=\sigma=0.05,\,V_0=\theta=0.2,\,\kappa=1$. Under the Feller condition $2\kappa\theta\geq\sigma^2$, the process $V$ is strictly positive.
Achtsis et al. \cite{acht:cool:nuye:2013} proposed a conditional sampling scheme for a barrier option under the Heston model. Their method uses the separation-of-variable property to remove the discontinuity caused by the barrier instead of pre-integration.
The results are in Table~\ref{table: heston}. 

\begin{table}
\centering
\begin{tabular}{cc|ccc|ccc}
\toprule
& & \multicolumn{3}{c|}{RQMC} & \multicolumn{3}{c}{RQMC + preint} \\
$\rho$  &   $K$  & STD &  PCA &  AS &   STD &  PCA &  CAS \\
\midrule
-0.5 & 90  &        12.1 &        71.6 &      134.1 &               13.2 &              244.3 &              \textbf{613.4} \\
     & 100 &         6.9 &        67.8 &       98.9 &                8.4 &              186.3 &              \textbf{411.3} \\
     & 110 &         4.7 &        49.1 &       84.7 &                6.2 &              145.2 &              \textbf{370.3} \\
 0.5 & 90  &         9.1 &        63.2 &       79.4 &               12.2 &              143.2 &              \textbf{352.1} \\
     & 100 &         5.5 &        44.5 &       74.4 &                8.0 &              117.2 &              \textbf{347.1} \\
     & 110 &         4.6 &        46.0 &       62.2 &                5.4 &               95.8 &              \textbf{254.0} \\
\bottomrule
\end{tabular}
\caption{Error reduction factors for the payoff function under Heston model.}
\label{table: heston}
\end{table}

The Stein-Stein model assumes that $V(t)$ follows the Ornstein-Uhlenbeck process
\[
\rd V(t)=\kappa(\theta-V(t))\rd t+\sigma V(t)\rd W_2(t).
\]
We take $r=0.05,\,V_0=\theta=0.2,\,\sigma=0.1,\,\kappa=1$. The results are in Table~\ref{table: stein stein}. 

\begin{table}
\centering
\begin{tabular}{cc|ccc|ccc}
\toprule
& & \multicolumn{3}{c|}{RQMC} & \multicolumn{3}{c}{RQMC + preint} \\
$\rho$  &   $K$  & STD &  PCA &  AS &   STD &  PCA &  CAS \\
\midrule
-0.5 & 90  &        12.2 &        71.9 &      135.9 &               13.3 &              245.5 &              \textbf{671.7} \\
     & 100 &         6.9 &        67.2 &       99.6 &                8.4 &              187.4 &              \textbf{481.0} \\
     & 110 &         4.7 &        49.2 &       84.4 &                6.2 &              146.0 &              \textbf{420.5} \\
 0.5 & 90  &         9.2 &        63.8 &       73.4 &               12.2 &              145.3 &              \textbf{372.6} \\
     & 100 &         5.6 &        44.7 &       64.1 &                8.1 &              119.1 &              \textbf{273.9} \\
     & 110 &         4.6 &        46.4 &       69.7 &                5.4 &               97.6 &              \textbf{259.1} \\
\bottomrule
\end{tabular}
\caption{Error reduction factors for the payoff function under Stein-Stein model.}
\label{table: stein stein}
\end{table}

In all three models, the use of pre-integration with the PCA or CAS rotations leads to significant improvements compared to methods without pre-integration. Pre-integration with the standard construction, although it has a closed form, does not bring much improvement. The CAS method with pre-integration consistently performs better than other methods.

To provide better intuition on why different rotations have very different performances, we plot the contours of the payoff function under the Heston model with different rotations in Figure~\ref{fig: heston contour}. The contours are in the first two variables $x_1$ and $x_2$, while the other coordinates $\bfx_{3:2d}$ are sampled at random. Under the standard construction (left panel), the payoff function shows no special trend in $x_1$ and $x_2$, meaning neither $x_1$ nor $x_2$ are important.
Under the AS rotation (middle panel), the function has larger variation along $x_1$ and is almost flat in $x_2$, resulting in a lower effective dimension.
We also observe that the CAS rotation (right panel), which is the AS rotation subject to the constraint that $U_{d+1:2d,1}=0$, is visually different from the AS rotation. 
Although the CAS rotation may not align $x_1$ with the most important direction of the integrand as effectively as the AS rotation, it does offer a convenient way to perform pre-integration, which is not possible with the AS rotation.
\begin{figure}
\centering
\includegraphics[width=\textwidth]{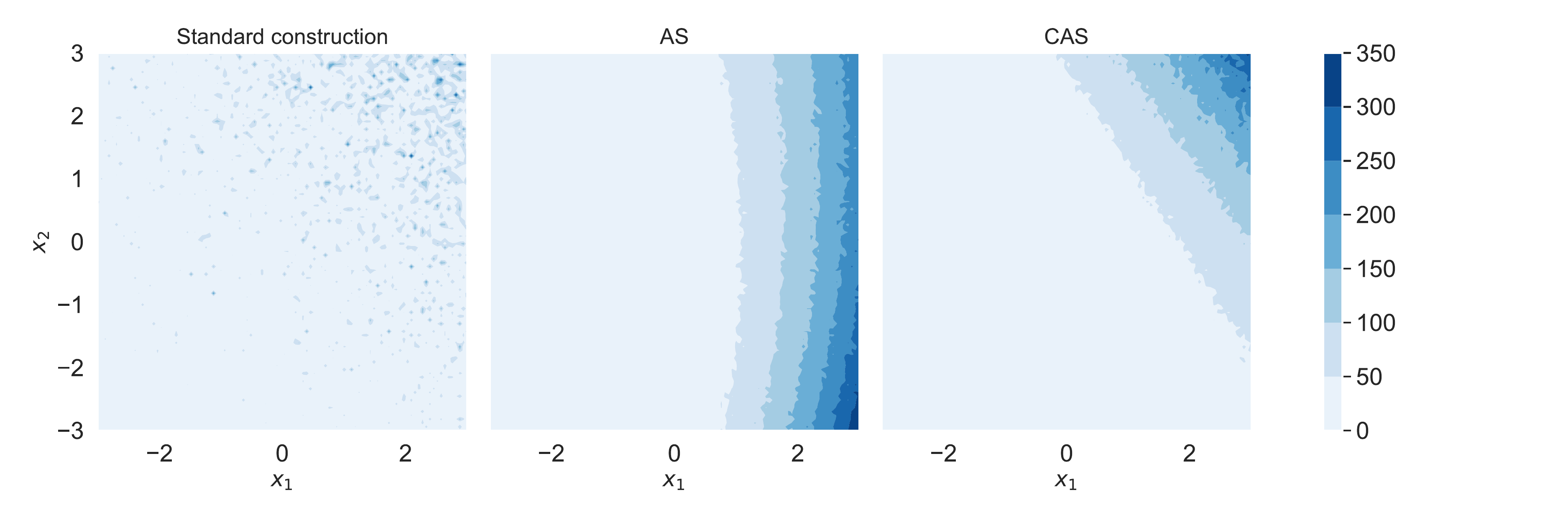}
\caption{Contours of the payoff function under the Heston model with different rotations. Left: no rotation (using the standard construction). Middle: the integrand is rotated using the orthogonal matrix found by active subspaces method. Right: the integrand is rotated by the orthogonal matrix $U$ found by CAS subject to the constraint that $U_{d+1:2d,1}=0$. This example takes $\rho=0.5$ and $K=100$.}
\label{fig: heston contour}
\end{figure}

\subsection{Greeks}

We use the pathwise estimate of the Greeks: \emph{delta}, \emph{gamma}, \emph{rho}, \emph{theta}, \emph{vega} for the Asian call option $\left(\bar S - K \right)_+$ under the constant volatility model~\eqref{equ: GBM} with $r=0.05$ and $\sigma=0.2$. We use the formulae given in \cite{he2019error}. The Greeks are discontinuous where $\bar S=K$. As discussed in Section~\ref{sec: greeks}, we first use the separation-of-variable trick to remove the discontinuities for the Greeks and then compute the active subspace for the continuous integrands. 
We start with the standard construction $R_{\text{std}}$ and find the orthogonal matrix $U$ by AS method. Then we replace $R_{\text{std}}$ by $R=R_{\text{std}} U$. 
In order for pre-integration to be feasible, we need the first column of $R$ to be all non-negative.  
In the experiments, we observe that $R_{\cdot1}$ has mixed signs for the \emph{gamma} and the \emph{rho}. So we need to choose $R_{\cdot1}$ to be some other unit vector whose entries are all non-negative and determine the rest $d-1$ columns by the CAS outlined in Algorithm~\ref{algo}.

For the \emph{rho}, we choose $R_{\cdot 1}$ to be the first column of $R_{\text{pca}}$ because the first principal component of the Brownian motion is known to be an important direction for the Asian options. 
For the \emph{gamma}, we choose $R_{\cdot 1}$ to be first column of $R_{\text{std}}$, because the first variable under the standard construction is an important variable as discussed in Section~\ref{sec: greeks}.

The results are shown in Table~\ref{table: greeks}. In each row, the greatest value as well as those that are larger than 90\% of the greatest value are marked bold. We have a few observations:
\begin{enumerate}
\item For the \emph{gamma}, the \emph{theta}, and the \emph{vega}, the proposed method has the smallest or close to smallest error among competing estimators.

\item For the \emph{gamma}, the advantage of the proposed method is particularly \blue{large}. This is the case where we take into account the specific form of the integrand. The proposed method has the flexibility to incorporate additional knowledge of the integrand, leading to significant improvements.

\item For the \emph{delta} and the \emph{rho} with $K=90$, pre-integration with PCA construction has the smallest error. These are the integrands for which PCA construction has very good performance.

\end{enumerate}

\begin{table}
\centering
\begin{tabular}{cc|ccc|ccc}
\toprule
& & \multicolumn{3}{c|}{RQMC} & \multicolumn{3}{c}{RQMC + preint} \\
Greeks  &   $K$  & STD &  PCA &  AS &   STD &  PCA &  CAS \\
\midrule
\emph{delta} & 90  &  2.0 &  6.1 &  7.7 &         3.2 &      \textbf{3685.5} &     1523.0 \\
     & 100 &  1.9 &  9.2 &  9.5 &         3.9 &      \textbf{3614.2} &     2839.1 \\
     & 110 &  1.8 &  7.8 &  7.3 &         2.9 &      \textbf{2653.3} &     1736.8 \\
\emph{gamma} & 90  &  5.1 & 14.1 & 21.7 &        10.4 &       389.1 &     \textbf{3872.1} \\
     & 100 &  2.6 &  9.1 & 13.8 &         6.5 &       368.0 &     \textbf{2377.8} \\
     & 110 &  2.5 &  7.4 & 11.2 &         6.9 &       390.3 &     \textbf{2264.1} \\
\emph{rho} & 90  &  1.5 &  4.8 &  7.4 &         2.5 &      \textbf{4716.0} &     3404.9 \\
     & 100 &  1.6 &  7.6 &  9.4 &         3.1 &      \textbf{7596.7} &     \textbf{7801.6} \\
     & 110 &  1.5 &  6.8 &  8.9 &         2.6 &      3764.3 &     \textbf{4340.7} \\
\emph{theta} & 90  &  6.2 & 23.7 & 44.3 &        10.0 &      1861.3 &     \textbf{2436.4} \\
     & 100 &  8.0 & 65.4 & 80.4 &        11.9 &      1834.8 &     \textbf{2607.4} \\
     & 110 &  3.2 & 14.6 & 13.8 &         4.7 &      \textbf{2082.1} &     \textbf{1878.9} \\
\emph{vega} & 90  &  4.3 & 15.5 & 27.1 &         7.1 &      1848.7 &     \textbf{2490.1} \\
     & 100 &  4.8 & 39.5 & 48.3 &         6.6 &      1632.3 &     \textbf{2484.7} \\
     & 110 &  4.1 & 21.7 & 22.0 &         5.8 &      1928.0 &     \textbf{2388.7} \\
\bottomrule
\end{tabular}
\caption{Error reduction factors for the Greeks under Black-Scholes model.}
\label{table: greeks}
\end{table}

For the \emph{gamma}, our method pre-integrates the first input variable under the standard construction, the same as in the ``smooth before dimension reduction" methods \cite{weng2017efficient,he2019error}. But our method differs from the previous method in the rotation for the remaining $d-1$ variables as mentioned in Section~\ref{sec: AS constraints}. 
The previous method, which applies GPCA to the pre-integrated function, achieves an error reduction factor of 4054, 3048, 2003, respectively for $K=90,100,110$ for \emph{gamma}. This performance is similar to our method with CAS. Hence, CAS can serve as an alternative dimension reduction method for pre-integrated functions.

\subsection{Conditional density estimation}

\blue{Proposed by L'Ecuyer \cite{l2022monte}, conditional density estimation (CDE) is a method for estimating the density $f(x)$ of a random variable $X$, which often represents the output of a computer experiment. This scenario arises when we can generate copies of $X$ but lack knowledge of its distribution. The density itself is the derivative of the CDF $F(x)$. While the empirical CDF, $\hat F(x)=\frac1n\sum_{i=1}^n\Indc{X_i\leq x}$, has a derivative equal to zero almost everywhere, its conditional expectation $\tilde F(x;\calG)=\EE{\hat F(x)\mid \calG}$, given some $\sigma$-field $\calG$, might be smooth. In that case, $\EE{\frac{\rd}{\rd x} \tilde F(x;\calG)}=\frac{\rd}{\rd x}\EE{\tF(x;\calG) }=f(x) $, implying that $\frac{\rd}{\rd x}\tF(x;\calG) $ serves as an unbiased estimator of the density \cite[Proposition 1]{l2022monte}. The CDE demonstrates a Mean Integrated Square Error (MISE) of order $O(n^{-1})$, which is faster than the rate of kernel density estimation (KDE) at $O(n^{-4/5})$. When combined with RQMC, CDE achieves a faster rate of $O(n^{-2+\ep})$ \cite{l2022monte}.
}

\blue{The sum of log-normal variables finds numerous applications in various scientific fields. In ecology, it is used to model animal abundance across multiple populations \cite{talis2023difficulties}. In wireless communication, the sum of log-normals is employed to model signal shadowing resulting from multiple obstructions in the propagation environment \cite{beaulieu1995estimating}. Additionally, it finds applications in physics, electronics, and economics, as cited in \cite{szyszkowicz2009limit}. However, the sum of log-normals is notoriously known to lack a closed form and it is difficult to approximate. Here we present an efficient method to estimate its density by combining CDE with the proposed CAS method.
}

\blue{When applying CDE to estimate the density of the random variable $X$, expressed as $X=\sum_{j=1}^d e^{z_i}$, where $\bfz\sim\N(\mu,\Sigma)$, the direct approach is to ``hide" (i.e., pre-integrate) one of the log-normal terms like $e^{z_1}$ \cite{asmussen2018conditional}. However, we propose to choose some linear combination of $\bfz$ to pre-integrate. Let $R$ be a matrix satisfying $RR\tran=\Sigma$, and let $\bfy\sim\N(0,I_d)$ such that $\bfz=R\bfy+\mu$. We choose $R$ such that its first column $R_{\cdot1}\geq 0$ componentwise. This allows us to write $X$ as $\sum_{j=1}^d \exp(\mu_j+\sum_{k=1}^d R_{jk}y_k )=:h(\bf{y})$, which is monotonically increasing in $y_1$. Consequently, given $\bfy_{-1}$, there exists a unique $y_1^*=y_1^*(x;\bfy_{-1})$ such that when $y_1=y_1^*$, $h(\bfy)=x$. This allows us to derive the conditional CDF as
\begin{align*}
\tF(x; \bfy_{-1} )=\bbE\left[\Indc{h(\bfy) \leq x }\mid \bfy_{-1} \right]=\Phi(y_1^*(x;\bfy_{-1}) ).
\end{align*}
Taking the derivative w.r.t. $x$ using the chain rule, we obtain
\begin{align*}
\frac{\rd }{\rd x} \tF(x; \bfy_{-1})&=\varphi(y_1^* )\cdot \frac{\rd}{\rd x }y_1^*(x,\bfy_{-1})\\
&=\varphi(y_1^*) \cdot \left[\sum_{j=1}^d R_{j1}\cdot  \exp\left(\mu_j+R_{j1}y_1^*+\sum_{k=2}^dR_{jk}y_k \right)\right]^{-1} .
\end{align*}
The last display provides an unbiased estimator of the density.
}

\blue{It remains to choose $R$ such that $RR\tran=\Sigma$, $R_{\cdot 1}\geq 0$ componentwise, and $y_1$ is an important direction using this choice of $R$. Starting from the Cholesky decomposition $R_0$ of $\Sigma$, we construct the matrix $R$ as $R=R_0U$, where $U$ is an orthogonal matrix.
To find the matrix $U$, we begin by computing the matrix $C=\EE{\nabla h(\bfy) \nabla h(\bfy)\tran}$. The first column of $U$ is determined by solving the following optimization problem:
\begin{align*}
\text{maximize}_{v}&\quad v\tran Cv\\
\text{s.t.}&\quad \|v\|_2=1,\; R_0v\geq 0.
\end{align*}
This problem can be solved using the same method described in Section~\ref{sec: spread}. Once the first column is obtained, we proceed to compute the remaining $d-1$ columns of $U$ by invoking Algorithm~\ref{algo}. 
}

\blue{We assess the performance of the proposed CDE+CAS method against the direct CDE method in simulation. We take $d=10$, $\mu=0$, and $\Sigma$ as the auto-correlation matrix with parameter $\rho$. We estimate the density on a grid of length 200 on $[0.1,50]$. In Figure~\ref{fig: lognormal}, we plot the density curves estimated by the two methods using 1024 RQMC samples. Notably, the direct CDE estimator exhibits a spiky curve, whereas the CAS estimator provides a smoother density curve.
\begin{figure}
\centering
\includegraphics[width=.5\textwidth]{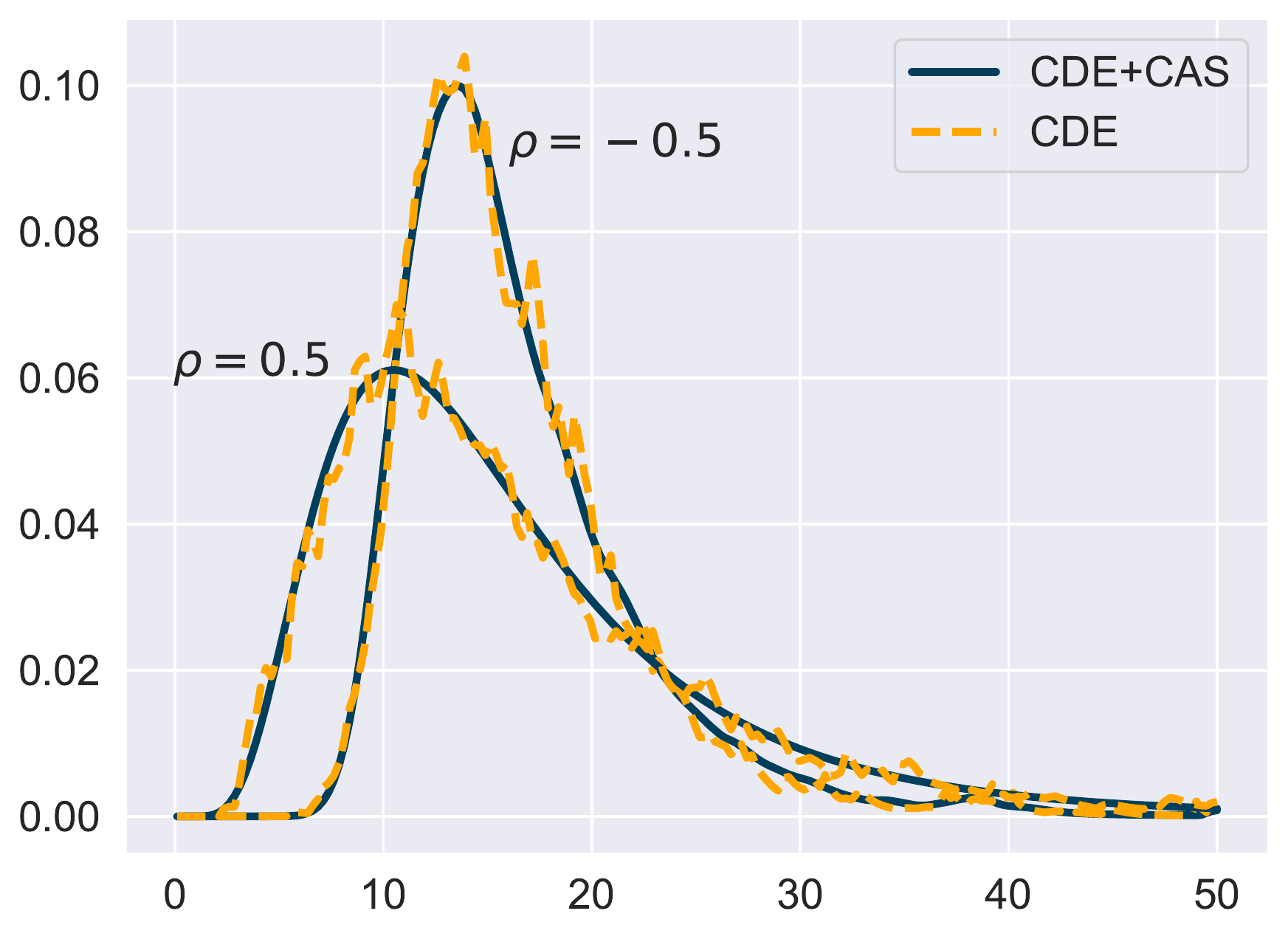}
\caption{Density of the sum of log-normal variables. The blue solid lines are estimated by the proposed CAS method, and the orange dashed lines are estimated by the direct CDE method. Both methods use $1024$ RQMC samples. Here $\rho$ is the auto-correlation parameter for the normal variables.}
\label{fig: lognormal}
\end{figure}
We also estimate the MISE following \cite{l2022monte} with 50 independent replicates. It is evident from Table~\ref{table: lognormal} that CAS significantly enhances the density estimator compared to the direct CDE method. }
\begin{table}
\centering
\begin{tabular}{ccc}
\toprule
$\rho$ &  CDE+CAS &  CDE \\
\midrule
-0.5 & 15.0 & 11.5 \\
 0.5 & 23.3 & 11.6 \\
\bottomrule
\end{tabular}
\caption{$-\log_2(\text{MISE})$ of the density estimators.}
\label{table: lognormal}
\end{table}

\subsection{Chemical reaction networks}
\blue{The proposed method can also be applied to improve the simulation of a chemical reaction network, which describes the dynamics of $N$ species participating in $J$ types of reactions.  Let $X_t\in \mathbb{R}^N$ represent the number of each species at time $t$. The main objective is to understand the distribution of $X_T$, where $T$ is the final time of interest. As an example, we consider the reversible isomerization system comprising $N=2$ species, denoted as $S_1$ and $S_2$, and $J=2$ reactions with reaction rates $c_1$ and $c_2$:
\begin{align*}
S_1 \underset{c_2}{\overset{c_1}{\leftrightarrows}} S_2.
\end{align*}
In this system, when reaction $j$ occurs, $X(t)$ is updated as $X(t)+\nu_j$, where $\nu_1=[1, -1]\tran$ and $\nu_2=[-1, 1]\tran$. The probability of reaction $j$ happening in the infinitesimal time interval $[t,t+\mathrm{d}t)$ is $a_j(X_t)\mathrm{d}t$, where $a_j(X_t)$ is known as the propensity function. The choice of $a_j(X_t)$ is often based on the Law of Mass Action \cite{beentjes2019quasi,puchhammer2021variance}. Exact simulation of the dynamics can be computationally expensive, so the dynamics are often approximated using the chemical Langevin equations (CLE), represented as the SDE
\begin{align*}
\rd X_t = (\sum_{j=1}^J \nu_j a_j(X_t) )\rd t + \sum_{j=1}^J \nu_j\sqrt{a_j(X_t)}  \rd W_{t}^{(j)}.
\end{align*}
One can simulate the SDE by applying the Euler-Maruyama discretization
\begin{align*}
X_{k+1}=X_k+\tau \sum_{j=1}^J \nu_j a_j(X_k) + \sum_{j=1}^J \nu_j \sqrt{a_j(X_k)\tau}  z^{(j)}_{k+1},\quad 0\leq k\leq d-1,
\end{align*}
where $\tau=T/d$ is the step size and $z^{(j)}_k\iid\N(0,1)$. In this case, the dimension of the problem is $d\cdot J$. For more background on the chemical reaction network, we refer to the survey \cite{higham2008modeling}.
}

\blue{
Our primary objective is to compute the expected value $\mathbb{E}[f(X_d)]$ for a given function $f$. Specifically, we consider the function $f(X_d)=\Indc{X_{d,1}\leq K}$, which corresponds to estimating the CDF of the number of species $S_1$ after time $T$. For this integrand, pre-integrating the last time step $z^{(1)}_d,z^{(2)}_d$ has a closed form, while pre-integrating any other variables does not. This is the constraint that we employ in the CAS method.
}

\blue{
Because $f$ is an indicator function, $\nabla f$ is 0 almost everywhere. To find the active subspace for the indicator function, we will instead use its smoothed version
\[
f^{\text{smo}}:=\frac12\left(1+\tanh(\frac{X_{d,1} - K}{5})\right).
\]
}
\blue{
We adopt the same parameters as in \cite{beentjes2019quasi,puchhammer2021variance}, where $X_0=[10^2,\,10^6]\tran$, $c_1=1,\,c_2=10^{-4},\,T=1.6,\,\tau=0.2$. So the integral dimension is 16. We compare four methods: RQMC, RQMC+AS, RQMC+preint, RQMC+preint+AS. In Table~\ref{table: cle}, we report the ERF of the four methods. We see that AS alone or pre-integration alone doesn't lead to significant improvement. However, their combination (preint + AS) drastically reduces the error by factors exceeding a thousand.
}
\begin{table}
\centering
\begin{tabular}{ccccc}
\toprule
Threshold $K$ &  RQMC &  RQMC + AS &  preint &  preint + AS \\
\midrule
90  &   1.6 &        9.4 &            8.2 &              1227.2 \\
100 &   2.2 &       11.1 &           10.2 &              2103.6 \\
110 &   1.4 &        9.4 &            7.4 &              1482.7 \\
\bottomrule
\end{tabular}
\caption{Error reduction factor for the chemical Langevin equation example. The first two columns do not apply pre-integration, while the last two columns pre-integrate the last time step in simulating the SDE. 
}
\label{table: cle}
\end{table}



\section{Conclusion}
\label{sec: conclusion}
This paper proposes a method called constrained active subspaces that generalizes the pre-integration with active subspaces method. The proposed algorithm takes into account both the tractability of pre-integration and the variable importance when selecting the pre-integration direction, making it more versatile and effective than previous methods. The method is demonstrated to be effective \blue{in a wide-range of applications}. It also provides a computationally-efficient alternative of conducting dimension reduction for pre-integrated functions.


%


\section*{Acknowledgments}
The author thanks Professor Art B. Owen for inspiring conversations. The author acknowledges support from the National Science Foundation grant DMS-2152780 and the Stanford Data Science Scholars program.

\bibliographystyle{apalike}
\bibliography{CAS_revision.bbl}
\end{document}